\definecolor{DarkRed}{rgb}{0.5,0.1,0.1}
\definecolor{DarkBlue}{rgb}{0.1,0.1,0.5}
\def\BState{\State\hskip-\ALG@thistlm}
\newtheorem{theorem}{Theorem}
\newtheorem{lemma}{Lemma}[section]
\newtheorem{claim}[lemma]{Claim}
\newtheorem{definition}{Definition}
\newtheorem{problem}{Problem}
\newtheorem{remark}[lemma]{Remark}
\newtheorem*{claim*}{Claim}
\newtheorem*{proposition*}{Proposition}
\newtheorem*{lemma*}{Lemma}
\newtheorem*{problem*}{Problem}
\newtheorem{mdresult}{Result}
\renewcommand{\qed}{\nobreak \ifvmode \relax \else
      \ifdim\lastskip<1.5em \hskip-\lastskip
      \hskip1.5em plus0em minus0.5em \fi \nobreak
      \vrule height0.75em width0.5em depth0.25em\fi}
\newcommand{\toShrink}{-.20cm}
\newcommand{\toShrinkEnu}{-.2cm}
\newcommand{\Ot}{\ensuremath{\widetilde{O}}}
\newcommand{\eps}{\ensuremath{\varepsilon}}
\newcommand{\Bracket}[1]{\Big[#1\Big]}
\newcommand{\paren}[1]{\ensuremath{\left(#1\right)}\xspace}
\newcommand{\card}[1]{\left\vert{#1}\right\vert}
\newcommand{\ceil}[1]{{\left\lceil{#1}\right\rceil}}
\newcommand{\set}[1]{\ensuremath{\left\{ #1 \right\}}}
\newcommand{\poly}{\mbox{\rm poly}}
\DeclareMathOperator*{\Exp}{\ensuremath{{\mathbb{E}}}}
\newcommand{\Ex}{\Exp}
\newcommand{\etal}{{\it et al.\,}}
\newenvironment{tbox}{\begin{tcolorbox}[
		enlarge top by=5pt,
		enlarge bottom by=5pt,
%		 breakable,
		 boxsep=0pt,
                  left=4pt,
                  right=4pt,
                  top=10pt,
                  arc=0pt,
                  boxrule=1pt,toprule=1pt,
                  colback=white
                  ]%%
	}
{\end{tcolorbox}}
\newcommand{\textbox}[2]{
{
\begin{tbox}
\textbf{#1}
{#2}
\end{tbox}
}
}
\newcommand{\bO}{\overline{O}}
\newcommand{\Gi}[1]{\ensuremath{G^{(#1)}}}
\newcommand{\Vi}[1]{\ensuremath{V^{(#1)}}}
\newcommand{\Vvc}{\ensuremath{O^{\star}}}
\newcommand{\bVvc}{\ensuremath{\overline{\Vvc}}}
\newcommand{\vc}{\ensuremath{\textnormal{\textsf{opt}}}}
\title{Simple Round Compression for Parallel Vertex Cover}
\author{Sepehr Assadi\\University of Pennsylvania \\ \texttt{sassadi@cis.upenn.edu}}% \and Sanjeev Khanna \\ University of Pennsylvania \\ {sanjeev@cis.upenn.edu}}
\date{}
\begin{document}
\maketitle

\begin{abstract} 

Recently, Czumaj~\etal~(arXiv 2017) presented a parallel (almost) $2$-approximation algorithm for the maximum
matching problem in only $O\paren{(\log\log{n})^2}$ rounds of the \emph{massive parallel computation (MPC)} framework, when the memory per machine is $O(n)$. The main 
approach in their work is a way of \emph{compressing} $O(\log{n})$ rounds of a distributed algorithm for maximum matching into only $O\paren{(\log\log{n})^2}$ MPC rounds. 

In this note, we present a similar algorithm for the closely related problem of approximating the minimum vertex cover in the MPC framework. 
We show that one can achieve an $O(\log{n})$ approximation to minimum vertex cover in only $O(\log\log{n})$ MPC rounds 
when the memory per machine is $O(n)$. Our algorithm for vertex cover is similar to the maximum matching algorithm of Czumaj~\etal but avoids many of the intricacies in their approach and as a result admits
a considerably simpler analysis (at a cost of a worse approximation guarantee). 
We obtain this result by modifying a previous parallel algorithm by Khanna and the author (SPAA 2017) for vertex cover that allowed for compressing $O(\log{n})$ rounds of a 
distributed algorithm into \emph{constant} MPC rounds when the memory allowed per machine is $O(n\sqrt{n})$. 
 
%The main idea behind this result is a way of \emph{compressing} multiple rounds of a 

\end{abstract}

%\clearpage
\setcounter{page}{1}

\newcommand{\Alg}{\ensuremath{\textnormal{\textsf{ALG}}}\xspace}
\newcommand{\EE}{\ensuremath{\mathcal{E}}}
\renewcommand{\tilde}{\widetilde}

\section{Introduction} \label{INTRO}

Minimum vertex cover and closely related maximum matching problems are among the most well-studied classical optimization problems. 
Naturally, these problems have been studied in the \emph{massive parallel computation (MPC)} model of~\cite{KarloffSV10} %(see also~\cite{GoodrichSZ11,BeameKS13,AndoniNOY14,BarbosaENW16})
that abstracts out the capability of many existing frameworks for parallel computation such as MapReduce and Hadoop. % such as MapReduce, Hadoop, and Spark. 

The first MPC algorithms for matching and vertex cover are due to Lattanzi~\etal~\cite{LattanziMSV11} and obtain $2$-approximation in $O(1)$ rounds and $n^{1+\Omega(1)}$ space per machine. 
The approximation guarantee for the matching problem was further improved to $(1+\eps)$ by Ahn and Guha~\cite{AhnG15}. Recently, it was shown by Khanna and the author that 
one can achieve an $O(1)$-approximation to maximum matching and $O(\log{n})$-approximation to minimum vertex cover in at most \emph{two} MPC rounds and $O(n\sqrt{n})$ space per machine~\cite{AssadiK17}. 
However, when the space allocated to each machine is $O(n)$, the performance of all these algorithms degrade to $\Omega(\log{n})$ rounds. 

In a recent breakthrough, Czumaj~\etal~\cite{CzumajLMMOS17} provided the first MPC algorithm for maximum matching that achieves an $O(1)$ approximation 
in only $O\paren{(\log\log{n})^2}$ rounds and $O(n)$ space per machine (even $n/(\log{n})^{O(\log\log{n})}$ space). While the high level idea of the algorithm in~\cite{CzumajLMMOS17} is  
natural in hindsight, the actual algorithm and analysis are quite intricate. In this note, we combine the ideas in~\cite{CzumajLMMOS17} and~\cite{AssadiK17} to provide a similar 
algorithm for the closely related problem of minimum vertex cover that admits a considerably simpler analysis. In particular, we prove that,

\begin{theorem}\label{thm:vc}
	There exists a randomized MPC algorithm that with high probability computes an $O(\log{n})$ approximation to minimum vertex cover in $O(\log\log{n})$ rounds, assuming that the memory per each machine is 
	$O(n)$.  
\end{theorem}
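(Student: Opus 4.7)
The plan is to combine two ingredients. The first, borrowed from~\cite{CzumajLMMOS17}, is the idea of \emph{round compression}: simulating an $O(\log n)$-round distributed algorithm inside $O(\log\log n)$ MPC rounds by having each machine locally execute several virtual rounds on a carefully chosen subsample of the graph. The second, borrowed from~\cite{AssadiK17}, is a one-shot vertex cover primitive: given a random vertex sample $S \subseteq V$ of density $p$, computing a minimum vertex cover $C_S$ of $G[S]$ and adjoining every vertex with residual degree above a threshold $\tilde{\Theta}(1/p)$ yields an $O(\log n)$-approximation to $\OPT(G)$. At $p = 1/\sqrt{n}$ this primitive already gives a two-round algorithm with $O(n\sqrt{n})$ memory because $G[S]$ fits on one machine.

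With only $O(n)$ memory per machine, $G[S]$ no longer fits on a single machine. Instead, the plan is to apply the Assadi-Khanna primitive recursively, peeling the graph in geometrically refined stages. Formally, set $G_0 = G$ and iterate for $i = 0, 1, \ldots, O(\log\log n)$: sample each vertex of $G_i$ with probability $p_i$ to form $G_i[S_i]$; add to the running cover $C$ every vertex whose observed degree in $G_i[S_i]$ exceeds a threshold calibrated to $\tilde{\Theta}(1/p_i)$; and let $G_{i+1}$ be the graph obtained by removing those vertices and their incident edges. Setting $p_{i+1} = \sqrt{p_i}$ starting from $p_0 = 1/\sqrt{n}$ makes $p_i \to 1$ in $O(\log\log n)$ steps, at which point the residual graph fits on a single machine and can be covered directly.

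Three claims drive the analysis. For \emph{memory}: after the level $i-1$ peeling, the maximum degree of $G_i$ is $\tilde{O}(1/p_{i-1})$, so the expected number of edges in $G_i[S_i]$ is at most $\tilde{O}(n \cdot p_i^2 / p_{i-1}) = \tilde{O}(n)$ by the choice $p_i = \sqrt{p_{i-1}}$. For \emph{concentration}: a standard Chernoff argument shows that each vertex's observed degree in $G_i[S_i]$ is tightly concentrated around its true $G_i$-degree scaled by $p_i$, so the thresholding decisions are correct with high probability. For \emph{approximation}: the vertices added at level $i$ can be charged against $\OPT(G_i) \le \OPT(G)$, with the Assadi-Khanna argument losing an $O(\log n)$ factor only in the final degree-based step, while intermediate peeling steps lose only constant factors, giving an overall $O(\log n)$ approximation.

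The main obstacle is calibrating the degree threshold so that (i) the residual graph's maximum degree drops by exactly the factor needed to keep the next level's subsample within $O(n)$ edges, yet (ii) no low-true-degree vertex is erroneously added to $C$ in a way that blows up the approximation. The critical simplification over the matching setting of~\cite{CzumajLMMOS17} is that a vertex placed incorrectly into the cover merely hurts the approximation ratio and can be charged against the optimum, whereas in their setting an erroneous matching decision can permanently destroy an augmenting structure; this is what lets us bypass the intricate matching-recovery arguments of~\cite{CzumajLMMOS17} in favor of a direct Chernoff-plus-charging analysis, yielding the simpler proof promised by the abstract.
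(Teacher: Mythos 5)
Your proposal identifies the right high-level ingredients (round compression via vertex sampling from \cite{CzumajLMMOS17}, and the peeling analysis from \cite{AssadiK17}), but the algorithm you describe has a genuine gap in the approximation guarantee. You propose one degree-thresholding step per phase, with thresholds decreasing geometrically ($1/p_i$ going roughly as $n^{1/2}, n^{1/4}, \ldots$). But the Parnas--Ron charging argument, which bounds the number of peeled vertices outside the optimal cover by $O(1)\cdot\vc(G)$ per step, requires the threshold to drop by only a \emph{constant} factor from the current maximum degree. If instead you peel all vertices of degree at least $\sqrt{\Delta}$ from a graph of maximum degree $\Delta$ in one shot, you can peel up to $\vc(G)\cdot\Delta/\sqrt{\Delta} = \vc(G)\cdot\sqrt{\Delta}$ vertices outside the optimum; already phase one alone gives an $\Omega(\sqrt{n})$ approximation. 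Your claim that ``intermediate peeling steps lose only constant factors'' is therefore false. What the paper actually does is run $\Theta(\log\Delta_{i+1})$ Parnas--Ron iterations \emph{locally} within each phase (the \LP subroutine), each dropping the threshold by a factor of $2$; the compression is in the number of \emph{communication} rounds, not in the number of peeling steps. This also requires $k_i$ parallel samples $\Vi{1}_i,\ldots,\Vi{k_i}_i$ (so that every vertex appears in at least one sample and can be peeled locally), whereas you describe a single sample per phase.

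There is a second gap even if you repair the above by running multiple local iterations per phase: your ``direct Chernoff-plus-charging analysis'' does not address the cascading problem that the paper explicitly flags. Chernoff controls the first iteration's decisions, but the residual degrees in iteration $t$ depend on exactly which set was peeled in iterations $1,\ldots,t-1$, and a small discrepancy between the sampled peel set and the ``true'' peel set can compound over $\Theta(\log\Delta_{i+1})$ iterations, causing the machines to peel wildly different sets than the sequential process would. The paper's resolution is the hypothetical peeling process with a factor-$4$ gap between the thresholds applied to vertices in $\Ostar$ versus $\bOstar$; this slack lets one prove by induction (Lemma~\ref{lem:vc-induction}) that the local peels are always \emph{sandwiched}: they contain at least the $O_{i,t}$'s and are contained in the $O_{i,t}\cup\bO_{i,t}$'s, with the induction hypothesis itself providing the monotonicity needed to make the next Chernoff step land on the right side. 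Your intuition that ``a vertex placed incorrectly into the cover merely hurts the approximation ratio and can be charged against the optimum'' is exactly the reason the sandwiching argument is tractable here and not in the matching setting, but it is not by itself a proof -- you still must bound how many incorrect vertices accumulate, and that bound is precisely what the hypothetical-process sandwich delivers.
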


We remark that similar to~\cite{CzumajLMMOS17}, we can extend our result to the slightly sublinear space regime where the memory per machine is only $n/(\log{n})^{O(1)}$. It is also
worth mentioning that the algorithm of~\cite{CzumajLMMOS17} does \emph{not} work for the minimum vertex cover problem. In fact, extending
the results in~\cite{CzumajLMMOS17} to minimum vertex cover has been cast as an open question in~\cite{CzumajLMMOS17}. 

%%\subsection{Previous Work}\label{sec:previous}
%%The first MPC algorithm for matching and vertex cover are designed by~\cite{LattanziMSV11} by introducing the so-called \emph{filtering} technique to compute a \emph{maximal} matching of a given graph (and hence 
%%a $2$-approximation to both maximum matching and minimum vertex cover).  
%%On a high level, the approach goes as follows: sample a set $S$ of \emph{edges} from the graph and send them all to a single machine; compute a maximal matching of the sampled edges and communicate it 
%%back to the machines; filter all edges that cannot be add to this partial maximal matching and continue. 

\section{Preliminaries} \label{PRELIM}

\paragraph{Notation.} For any integer $t$, $[t] := \set{1,\ldots,t}$. Let $G(V,E)$ be a graph; $\vc(G)$ denotes the minimum vertex cover size in $G$. 
For vertices $S \subseteq V$ and a vertex $v \in V \setminus S$, $D(v,S)$ denotes the size of the intersection of neighbor-set of $v$ and the set $S$, i.e., the degree of $v$ to the set $S$. 
%For an edge set $E' \subseteq E$, we use $V(E')$ to refer to vertices incident on $E'$. 
 
 \paragraph{MPC model.} We use the same model of parallel computation as in~\cite{CzumajLMMOS17}, namely the \emph{Massive Parallel Computation (MPC)} model, 
 that is a simple variant of the model introduced originally in~\cite{KarloffSV10} and was further refined in~\cite{GoodrichSZ11,BeameKS13,AndoniNOY14}. 
 
 In this model, there are $p$ machines each with a memory of size $s$ such that $p \cdot s = O(N)$; here, $N$ is the total memory required to represent the input. 
The computation proceeds in synchronous rounds: in each round, each machine performs some local computation and at the end of the round machines exchange messages to guide the computation for the next round. 
All messages sent and received by each machine in each round have to fit into the local memory of the machine, and hence their total length is bounded by $s$ in each round. 
At the end, machines collectively output the solution. The data output by each machine also has to fit in its local memory.

As we consider graph problems in this work, the input size can be as large as $\Theta(n^2)$ for a graph with $n$ vertices. 
Similar to~\cite{CzumajLMMOS17}, our main focus is on the case when the memory per each machine is linear, i.e., $s = O(n)$.

\newcommand{\OPVC}{\ensuremath{\textnormal{\textsf{Original-Peeling-VC}}}\xspace}
\newcommand{\LP}{\ensuremath{\textnormal{\textsf{Local-Peeling}}}\xspace}
\newcommand{\PA}{\ensuremath{\textnormal{\textsf{Parallel-Peeling}}}\xspace}

\newcommand{\Pstar}{\ensuremath{P^{\star}}}

\renewcommand{\Pi}[1]{\ensuremath{P^{(#1)}}}
\newcommand{\Oi}[1]{\ensuremath{O^{(#1)}}}
\newcommand{\bOi}[1]{\ensuremath{\bO^{(#1)}}}
\newcommand{\Ai}[1]{\ensuremath{A^{(#1)}}}
\newcommand{\Bi}[1]{\ensuremath{B^{(#1)}}}

\newcommand{\tDelta}{\ensuremath{\widetilde{\Delta}}}

\newcommand{\Ostar}{\ensuremath{O^{\star}}}
\newcommand{\bOstar}{\ensuremath{\overline{O^{\star}}}}

\section{A Parallel Algorithm for Vertex Cover}\label{sec:vc-PA}

We present our parallel algorithm, \PA, for vertex cover in this section. We first provide a high level overview of our approach and compare our techniques with those of~\cite{AssadiK17,CzumajLMMOS17}. 
Next, we present the formal algorithm and then analyze its approximation ratio. We finish this section by providing the necessary details for implementing \PA in the MPC model. For clarity of the
exposition, we present our algorithm in its simplest form which requires $\Ot(n)$ memory per machine. At the end of this section, we provide the necessary details for extending this algorithm to the case when the space per machine is $O(n)$ or even slightly sublinear in $n$. 

\subsection{Overview of the Algorithm}\label{sec:vc-overview} 

The starting point of our algorithm is the following \emph{peeling process} of Parnas and Ron~\cite{ParnasR07} for 
computing an $O(\log{n})$-approximate vertex cover: Add vertices of degree at least $n/2$ to the vertex cover and remove them and their incident edges from the graph, namely, 
\emph{peel} these vertices; repeat this process on remaining vertices with degree threshold $n/4,n/8,\ldots,$ until all edges are covered. 

The above algorithm requires $O(\log{n})$ \emph{sequential} iterations to compute the answer.
To implement this sequential process in parallel, we use a \emph{round compression} approach as in~\cite{CzumajLMMOS17} (and implicitly in~\cite{AssadiK17}). We first partition the $O(\log{n})$ iterations of this algorithm into 
$O(\log\log{n})$ \emph{phases}: the first phase corresponds to degree thresholds between $n$ and $n^{1/2}$, the second phase to thresholds between $n^{1/2}$ and $n^{1/4}$, and so on. 
The goal now is to implement each phase in only $O(1)$ rounds of \emph{parallel} computation. To do this, we use
a \emph{random vertex partitioning} idea used by~\cite{CzumajLMMOS17}. At the beginning of each phase, we partition\footnote{Strictly speaking, our method does not necessarily leads to a ``partition'' of the vertex-set as there 
can be some small overlap between vertices across different pieces.}  the vertices of the input graph randomly into $k$ pieces and create $k$ \emph{induced subgraphs} $\Gi{1},\ldots,\Gi{k}$. Each subgraph is sent to a separate 
machine/processor which continues to run the sequential algorithm on this subgraph \emph{locally} with no further communication across the machines. After this phase ends, the machines communicate the set of peeled vertices to 
each other and update the underlying graph, i.e., remove all peeled vertices and their incident edges. Subsequent phases are implemented in a similar way. This parallel algorithm clearly can be implemented in only $O(\log\log{n})$
rounds as we only require $O(1)$ rounds per each phase. 

The high level intuition behind the round compression step is that by randomly partitioning the vertices, we can somehow ``preserve'' the degree distribution of sampled vertices even across multiple iterations of one phase. As a result, the machines should peel
the same set of vertices across. In other words, one may hope that the set of all peeled vertices by the sequential process and the union of peeled vertices across the machines in the parallel algorithm are essentially the same in each 
phase. This intuition however runs into a serious technical difficulty: the peeling process is quite sensitive to the exact degree of vertices and even slight changes in degree can move vertices between different
iterations that potentially results a cascading effect, leading to peeling very different sets of vertices across the machines. 

To address this issue, we use the techniques developed in~\cite{AssadiK17}: we design a \emph{hypothetical peeling process} which is aware of the actual minimum vertex cover of $G$ 
and show that the actual peeling process of each machine in each phase is ``sandwiched'' between two applications of this hypothetical process with different degree thresholds for peeling vertices. We then use this
to argue that the set of all vertices peeled across the machines are always contained in the solution of the hypothetical peeling process which in turn can be shown to be a relatively small set. 

\paragraph{Comparison with~\cite{AssadiK17}.} Our main idea of ``mimicing'' the sequential peeling process of~\cite{ParnasR07} for approximating minimum vertex cover in a smaller number of rounds
of parallel computation, as well as the analysis of the algorithm based on the introduction of the hypothetical process are both borrowed from~\cite{AssadiK17}. The main difference in 
our approach and~\cite{AssadiK17} lies in the idea of random vertex partitioning (which appeared first in~\cite{CzumajLMMOS17}) as opposed to the random 
edge partitioning method of~\cite{AssadiK17} (i.e., the so-called \emph{randomized composable coreset} method). It was shown in~\cite{AssadiK17} that by allocating $\Theta(n\sqrt{n})$ memory per machine
and partitioning the edges randomly across the machines, one can preserve the degree distribution of \emph{all} sufficiently high-degree vertices on each machine and implement the first phase of the actual peeling process 
in only one round of parallel computation to process all vertices of degree more than $\sqrt{n}$. Moreover, as  the remaining graph is now sufficiently sparse to be processed on a single machine of memory $\Theta(n\sqrt{n})$, 
the whole process can be implemented in a constant number of rounds. As our goal here is to use only 
$O(n)$ memory per machine, we cannot afford to preserve the degree distribution of \emph{all} vertices in every machine, neither can we simply stop after processing the first phase as the graph
is not sufficiently sparse to be stored on a single machine with $O(n)$ memory. 

\paragraph{Comparison with~\cite{CzumajLMMOS17}.} The main approach taken by Czumaj~\etal~\cite{CzumajLMMOS17} 
is also to compress the rounds of a sequential peeling process for obtaining an $O(1)$-approximation to both matching and 
vertex cover by Onak and Rubinfeld~\cite{OnakR10} (which itself is an extension of the peeling process by~\cite{ParnasR07} used in this paper)
to smaller number of MPC rounds. To achieve this, Czumaj~\etal showed that one can partition the vertices of the graph randomly across machines with $O(n)$ memory and run each phase of the peeling process in parallel
with no further communication between the machines in each phase (as is the case in this note). Similar to what argued earlier, the peeling process of~\cite{OnakR10} 
is quite sensitive to the exact degrees of vertices (even more than the algorithm of~\cite{ParnasR07}). As a result, it is a highly non-trivial challenge to argue that the parallel implementation of the algorithm 
can indeed ``faithfully'' mimic the original peeling process. To achieve this, the authors in~\cite{CzumajLMMOS17} introduce important modifications to the algorithm of~\cite{OnakR10} that allow for 
``preserving randomness'' of vertex partitioning over multiple iterations of one phase. Roughly speaking, the modified peeling process of~\cite{CzumajLMMOS17} uses a 
carefully chosen ``soft'' degree thresholding rule (rather than a fixed number as is the case in~\cite{OnakR10,ParnasR07} and our simulations in this note and previous work in~\cite{AssadiK17}) that results in 
a probability distribution for peeling vertices across the machines, whereby each vertex is peeled with almost the same probability, independent of the machine on which it resides. We refer 
the interested reader to~\cite{CzumajLMMOS17} (see Section~1.4 in particular) for more details but mention here that the algorithm and analysis in~\cite{CzumajLMMOS17} are quite intricate and require an additional number of
ideas.

\subsection{The Algorithm}\label{sec:vc-alg}

We now present our parallel algorithm \PA for approximating the minimum vertex cover. The sub-routine \LP responsible for implementing the peeling process on each machine locally is described afterwards.

\textbox{$\PA(G)$. \textnormal{A parallel algorithm for computing a vertex cover of a given graph $G$.}}{
\begin{enumerate}
	\item Define $G_1 = G$ and $\tau := O(\log\log{n})$ degree thresholds: 
	\[
		\Delta_1 := n, ~~~~ \Delta_2 := n^{1/2}, ~~ \ldots ~~ \Delta_i := n^{1/2^{i-1}}, ~~ \ldots ~~ \Delta_{\tau} := 4\log{n}.
	\] 
	\item {For} $i = 1$ to $\tau$ \emph{phases} {do} 
	\begin{enumerate}
		\item Create $k_i := \Delta_{i+1}$ graphs $\Gi{1}_i,\ldots,\Gi{k_i}_i$ as follows: 
		\begin{enumerate}[(i)]
			\item Create $k_i$ sets of vertices $\Vi{1}_i,\ldots,\Vi{k_i}_i$ whereby each set $\Vi{j}_i$ is chosen by picking each vertex in $G$ \emph{independently and uniformly at random} 
			w.p. $p_i := \frac{4\log{n}}{\Delta_{i+1}}$ (the only reason we are sampling all vertices originally in $G$ and not just the ones in $G_i$ is to simplify the math). 
			\item Each graph $\Gi{j}_i$ is the \emph{induced} subgraph of $G_i$ over vertices $\Vi{j}_i$. 
		\end{enumerate}
			
		\item\label{line:LP} {For} $j =1$ to $k_i$ \textbf{do in parallel}: $\Pi{j}_i \leftarrow \LP(\Gi{j}_i, p_i \cdot \Delta_i)$. 
		 
		\item Let $P_i \leftarrow \bigcup_j \Pi{j}_{i}$ and $G_{i+1} \leftarrow G_i \setminus P_i$. 
		\item\label{line:extra-vertices} \emph{Update} $G_{i+1}$ by removing every vertex of degree more than $\Delta_{i+1}$ (and their incident edges). Add these removed vertices to $P_i$ as well. 
	\end{enumerate}
	\item\label{line:local} Compute an $O(1)$-approximate vertex cover $P_{\tau+1}$ of $G_{\tau+1}$ on a single machine. 
	\item {Return} $P := \bigcup_{i=1}^{\tau+1} P_i$. 
\end{enumerate}
}
Several remarks are in order: first, notice that the choice of degree thresholds is such that $\Delta_{i+1} = \sqrt{\Delta_{i}}$ for each $i \in [\tau]$. Moreover, 
by Line~(\ref{line:extra-vertices}) in \PA, we always maintain the invariant that the maximum degree of the graph $G_i$ for phase $i$ is at most $\Delta_i$. 
Finally, the parallel implementation of the peeling process ends when the remaining graph is sufficiently sparse, and hence fits the memory of a single machine. At this point, we can simply 
find a vertex cover of the remaining graph using any sequential algorithm for vertex cover on a single machine. We now describe the \LP algorithm that is run by each machine locally in Line~(\ref{line:LP}) of \PA. 
%As such, by Line~(\ref{line:LP}), the \lP algorithm is always calle

\newcommand{\tmax}{\ensuremath{t_{\textnormal{\textsf{max}}}}}
\textbox{$\LP(\Gi{j}_i,\Delta)$. \textnormal{The sub-routine responsible for implementing the peeling process locally.}}{
\begin{enumerate}
		\item Define $\Gi{j}_{i,1} = \Gi{j}_{i}$ and let $\tmax$ be the smallest integer such that $\Delta/2^{\tmax} \leq 4\log{n}$. 
		\item {For} $t = 1$ to $\tmax$ {do}: 
		\begin{align*}
			\Pi{j}_{i,t} 	\leftarrow \set{\text{vertices of degree} \geq \Delta / 2^{t+1} \text{ in $\Gi{j}_{i,t}$}}, ~~~
		  	\Gi{j}_{i,t+1} \leftarrow \Gi{j}_{i,t} \setminus \Pi{j}_{i,t}. 
		\end{align*}
		\item {Return} $\Pi{j}_i := \bigcup_{t} \Pi{j}_{i,t}$. 
\end{enumerate}
}

For ease of the presentation, from now on, subscript ``$i$'' always corresponds to the phases of \PA, superscript ``$j$'' corresponds to the induced subgraphs in each phase, and subscript ``$t$'' corresponds to the iterations of \LP. 
Notice that number of phases is $\tau$, and in each phase $i \in [\tau]$, number of induced subgraphs is $k_i = \Delta_{i+1}$, and number of iterations is $\log{\paren{\Delta_{i+1}}}$. 
Our main result is that, 
\begin{theorem}\label{thm:PA}
	For any graph $G$, $\PA(G)$ outputs an $O(\log{n})$-approximation to minimum vertex cover of $G$ with probability at least $1-O(1/n)$. 
\end{theorem}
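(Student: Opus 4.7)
My plan is to prove Theorem~\ref{thm:PA} by separating validity from the approximation-size bound. Validity is immediate from the algorithm's structure: every edge of $G$ either has an endpoint peeled by \LP in some $P_i$, has an endpoint removed by the cleanup in Line~\ref{line:extra-vertices} (again contributing to $P_i$), or survives into $G_{\tau+1}$ where the single-machine subroutine on Line~\ref{line:local} covers it. So the real work is to bound $|P|$ by $O(\log n)\cdot |\Ostar|$ with probability $1-O(1/n)$, where $\Ostar$ is a fixed minimum vertex cover of $G$.

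The key idea, following the template of~\cite{AssadiK17}, is to compare the collective behaviour of the $k_i$ parallel local peelings in phase $i$ against a hypothetical \emph{sequential} peeling process run directly on $G_i$. Concretely, for each phase $i$ I would introduce two such hypothetical processes $H_i^-$ and $H_i^+$ on all of $G_i$ with iteration-$t$ thresholds $(1-\eps)\Delta_i/2^{t+1}$ and $(1+\eps)\Delta_i/2^{t+1}$, respectively, for a small constant $\eps$ (say $\eps=1/10$). Let $U_i^-$ denote the set of vertices ever peeled by $H_i^-$, and let $R_t^{\pm}$ and $R_t^{j}$ denote the remaining vertex sets of $H_i^{\pm}$ and of machine $j$ before iteration $t$. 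The central claim is the sandwich
\[
  R_t^{-} \cap \Vi{j}_i \;\subseteq\; R_t^{j} \;\subseteq\; R_t^{+} \cap \Vi{j}_i,
\]
holding simultaneously for all $j\in [k_i]$ and all $t \leq \tmax$ with probability $1-O(1/n^2)$ per phase. Together with an analogous concentration argument for Line~\ref{line:extra-vertices} — any vertex with $G_{i+1}$-degree exceeding $\Delta_{i+1}$ has its $G_i[R^-]$-degree above the final threshold of $H_i^-$ and is therefore peeled there as well — this sandwich implies $P_i \subseteq U_i^-$.

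Granting the sandwich, $H_i^-$ is a textbook Parnas--Ron peeling on $G_i$ with halving thresholds from roughly $\Delta_i/4$ down to roughly $\Delta_{i+1}$, so the standard charging argument (every iteration removes at most $O(\vc(G_i))$ vertices, since the remaining graph has max degree comparable to the current threshold and every edge is covered by $\Ostar$) gives $|U_i^-| = O(\log(\Delta_i/\Delta_{i+1})) \cdot \vc(G_i) = O(\log n/2^i)\cdot |\Ostar|$. Since $\tau = O(\log\log n)$ and $\sum_i \log n/2^i = O(\log n)$, we conclude $\sum_i |P_i| = O(\log n)\cdot |\Ostar|$. Adding $|P_{\tau+1}| = O(\vc(G_{\tau+1})) = O(|\Ostar|)$ and a union bound over the $\tau$ phases finishes the proof with overall probability $1-O(1/n)$. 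For the sandwich itself I would induct on $t$, inserting a Chernoff bound at each step: conditioned on the inclusions at iterations $<t$, the \LP-degree of any vertex $v$ present on machine $j$ is sharply concentrated around $p_i\cdot \deg_{G_i[R_t^j]}(v)$, and $\deg_{G_i[R_t^j]}(v)$ is itself bracketed by $\deg_{G_i[R_t^-]}(v)$ and $\deg_{G_i[R_t^+]}(v)$ through the inductive inclusion. A union bound over at most $n$ vertices, $O(\log n)$ iterations, $k_i\leq n$ machines, and $\tau$ phases — each Chernoff event failing with probability $n^{-\Theta(1)}$ because $p_i \cdot \Delta_i/2^{t+1} \geq 4\log n$ throughout phase $i$ — absorbs into the $O(1/n)$ failure budget.

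The main obstacle will be the inductive propagation of the sandwich, because the machine-side set $R_t^j$ is itself a random object depending on both the Bernoulli sampling producing $\Vi{j}_i$ and on every earlier peeling decision on machine $j$. The two hypothetical processes are precisely the device that resolves this dependency: once one conditions on the deterministic-given-$G_i$ envelopes $R_t^{-}$ and $R_t^{+}$, the only residual randomness is the independent Bernoulli$(p_i)$ sampling that defines each $\Vi{j}_i$, and Chernoff behaves cleanly on that conditional space. The one delicate calibration is to choose $\eps$ (and the corresponding concentration slack $\eps'$) small enough that the two sides of the sandwich do not drift apart over $\tmax \leq O(\log n)$ iterations within a single phase; because the slack is multiplicative and reinstated at each iteration, a constant such as $\eps = 1/10$ suffices.
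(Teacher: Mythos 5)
Your high-level plan — compare the parallel algorithm against a deterministic hypothetical peeling process and bound the latter by a Parnas--Ron charging argument — is the paper's, and the charging argument and union-bound bookkeeping in your sketch are fine. But the sandwich lemma you propose has a gap that the paper's hypothetical process is specifically engineered to avoid. The inductive step of $R_t^{-}\cap\Vi{j}_i\subseteq R_t^{j}\subseteq R_t^{+}\cap\Vi{j}_i$ does not close: to propagate the left inclusion you need an \emph{upper} bound on the $\Gi{j}_{i,t}$-degree of a vertex $v\in R^-_{t+1}\cap\Vi{j}_i$, and after Chernoff the inductive hypothesis $R^j_t\subseteq R^+_t\cap\Vi{j}_i$ controls this degree by $p_i\cdot|N_{G_i}(v)\cap R^+_t|$, whereas $v\in R^-_{t+1}$ only bounds $|N_{G_i}(v)\cap R^-_t|$. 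The gap $|N_{G_i}(v)\cap(R^+_t\setminus R^-_t)|$ is the number of neighbours of $v$ that $H^-$ has peeled but $H^+$ has not — a quantity with no a priori bound, and certainly not a multiplicative $\eps$-term. The right inclusion fails symmetrically. This is exactly the cascading effect flagged in Section~\ref{sec:vc-overview}: borderline vertices flip status between $H^-$ and $H^+$, their neighbours' degrees change, and $R^+_t\setminus R^-_t$ can grow over the up to $\Theta(\log n)$ iterations within a phase. A constant $\eps$ cannot absorb an additive set discrepancy.

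The paper sidesteps this by making the hypothetical process \emph{asymmetric in a fixed minimum vertex cover $\Ostar$}: it runs on the bipartite graph $H_1$ (dropping $\Ostar$--$\Ostar$ edges) and uses threshold $\Delta_i/2^t$ for $\Ostar$-vertices and the strictly smaller $\Delta_i/2^{t+2}$ for $\bOstar$-vertices. The invariant in Lemma~\ref{lem:vc-induction} — the algorithm peels a \emph{superset} of the hypothetical $\Ostar$-peels and a \emph{subset} of the hypothetical $\bOstar$-peels — is self-reinforcing in exactly the direction each bound needs: for an $\Ostar$-vertex you want a \emph{lower} bound on its machine degree, and having more surviving $\bOstar$-neighbours on the machine (plus any extra $\Ostar$--$\Ostar$ edges the machine sees) only helps; for a $\bOstar$-vertex you want an \emph{upper} bound, and since $\Ostar$ covers every edge all of its neighbours are $\Ostar$-vertices, of which strictly fewer survive on the machine. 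No drift accumulates because the one-sided inclusions always point where the bound requires. Your $H^\pm$ construction lacks this alignment: $R^-$ is never the set that controls the inequality you need, and $R^+$ controls it in the wrong direction. The sandwich lemma must be rebuilt around the $\Ostar$/$\bOstar$ asymmetry rather than around symmetric upper/lower threshold processes.
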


It is easy to verify that the set $P$ returned by \PA is a feasible vertex cover of $G$: any edge in $G \setminus G_{\tau+1}$ is incident on some vertex in $P \setminus P_{\tau+1}$ and $P_{\tau+1}$ is a 
vertex cover of $G_{\tau+1}$. In the next section, we prove the approximation guarantee of \PA.

\subsection{Approximation Guarantee of the Algorithm}\label{sec:vc-analysis} 

We analyze the approximation guarantee of \PA in this section. As argued earlier, the main idea behind \PA is to implement the sequential algorithm of~\cite{ParnasR07} in parallel. 
For our analysis, we introduce a \emph{hypothetical} version of this sequential algorithm with \emph{different} degree thresholds for peeling vertices in a minimum vertex cover $\Ostar$ of $G$ and the remaining
vertices $\bOstar:= V \setminus \Ostar$. Consider the following process on the original graph $G$ (defined only for analysis): 

\begin{tbox}
\begin{enumerate}
	\item Let $H_{1}$ be the bipartite graph obtained from $G$ by removing edges between vertices in $\Ostar$. %(there is no edge between the vertices in $\bVvc$ as $\Vvc$ is a vertex cover). 
	\item For $i = 1$ to $\tau$ phases: 
	\begin{enumerate}
		\item Let $O_i \leftarrow \emptyset$ and $\bO_i \leftarrow \emptyset$ initially. Define $H_{i,1}  := H_{i}$.  
		\item For $t = 1$ to $\ceil{\log{(\Delta_{i+1})}}$, let: 
		\begin{align*} 
		O_{i,t} &\leftarrow \set{\text{vertices in $\Ostar$ of degree} \geq {\Delta_i / 2^{t}} \text{ in $H_{i,t}$}} \\
		\bO_{i,t} &\leftarrow \set{\text{vertices in $\bOstar$ of  degree} \geq \Delta_i / 2^{t+2} \text{ in $H_{i,t}$}} \\ 
		H_{i,t+1} &\leftarrow H_{i,t} \setminus (O_{i,t} \cup \bO_{i,t}) ~~~~ O_i \leftarrow O_i \cup O_{i,t} ~~~~ \bO_i \leftarrow \bO_i \cup \bO_{i,t} 
		\end{align*}
		
		\item Let $H_{i+1} := H_i \setminus (O_{i} \cup \bO_{i})$. 
	\end{enumerate} 

\end{enumerate}
\end{tbox}

We show that \PA is ``faithfully'' mimicking this hypothetical process: with high probability, \PA does not peel more vertices from $V \setminus \Ostar$ than this hypothetical process (it may however
peel more vertices from $\Ostar$). We emphasize that this hypothetical process is \emph{only} defined for the purpose of the analysis; one cannot implement it \emph{even sequentially} without first computing
a minimum vertex cover of $G$. 

The first claim is that the set of peeled vertices by this hypothetical process itself is not much larger than a minimum vertex cover of $G$.

\begin{lemma}\label{lem:vc-hypothetical-ratio}
	$\card{\bigcup_{i=1}^{\tau} O_i \cup \bO_i} = O(\log{n}) \cdot \vc(G)$. 
\end{lemma}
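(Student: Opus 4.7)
The plan is a simple two-part counting argument: bound $\card{\bigcup_i O_i}$ trivially, and bound $\card{\bigcup_i \bO_i}$ by double-counting edges between peeled $\bOstar$-vertices and their surviving neighbors in $\Ostar$.

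First, since $O_{i,t} \subseteq \Ostar$ for every $(i,t)$, we have $\bigcup_i O_i \subseteq \Ostar$ and hence $\card{\bigcup_i O_i} \leq \card{\Ostar} = \vc(G)$. The substantive work is to bound the $\bOstar$-side. The crucial structural fact is that $\bOstar = V \setminus \Ostar$ is an independent set in $G$ (being the complement of a vertex cover), so every edge in $H_1$ -- and hence in every subgraph $H_{i,t}$ -- has at least one endpoint in $\Ostar$; in particular every edge incident to a vertex of $\bOstar$ in $H_{i,t}$ goes to $\Ostar \cap V(H_{i,t})$.

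The heart of the argument is the following invariant, which I would prove by a short induction on $(i,t)$: at the start of iteration $t$ of phase $i$, every vertex $u \in \Ostar \cap V(H_{i,t})$ has $H_{i,t}$-degree strictly less than $\Delta_i/2^{t-1}$. For $t \geq 2$ this is immediate because such a $u$ was not peeled into $O_{i,t-1}$; for $t = 1$ and $i \geq 2$ it follows from the terminal iteration $t^\star = \ceil{\log \Delta_i}$ of the previous phase, after which $\Ostar$-vertices in $H_i$ have degree less than $\Delta_{i-1}/2^{t^\star} \leq \Delta_{i-1}/\Delta_i = \Delta_i$; and for $i=t=1$ it is trivial since degrees in $H_1$ are at most $n = \Delta_1$. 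Granted the invariant, for each iteration $(i,t)$ we double-count edges between $\bO_{i,t}$ and $\Ostar \cap V(H_{i,t})$:
\[
\frac{\Delta_i}{2^{t+2}} \cdot \card{\bO_{i,t}} \;\leq\; \text{\#edges} \;\leq\; \frac{\Delta_i}{2^{t-1}} \cdot \card{\Ostar},
\]
so $\card{\bO_{i,t}} \leq 8\,\vc(G)$, uniformly in $t$.

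Summing over all iterations and phases yields
\[
\Big\lvert \bigcup_i \bO_i \Big\rvert \;\leq\; \sum_{i=1}^{\tau} \sum_{t=1}^{\ceil{\log \Delta_{i+1}}} \card{\bO_{i,t}} \;\leq\; 8\,\vc(G)\cdot \sum_{i=1}^{\tau} \ceil{\log \Delta_{i+1}},
\]
and since $\Delta_{i+1} = n^{1/2^i}$ for $i < \tau$ and $\Delta_{\tau} = 4\log n$, the sum $\sum_i \ceil{\log \Delta_{i+1}}$ telescopes geometrically to $O(\log n)$. Combining with the trivial bound on $\card{\bigcup_i O_i}$ gives the claimed $O(\log n)\cdot \vc(G)$.

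The only delicate step -- and where I expect to spend most of the care -- is the degree invariant at the boundary between phases, i.e.\ showing that the threshold at the last iteration of phase $i{-}1$ drops to at most $\Delta_i$ so that the invariant relaunches cleanly at $t=1$ of phase $i$. Everything else is bookkeeping.
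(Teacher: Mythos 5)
Your proposal is correct and takes essentially the same approach as the paper: the trivial bound $\card{\bigcup_i O_i} \leq \vc(G)$, the per-iteration counting bound $\card{\bO_{i,t}} \leq 8\,\vc(G)$ (edges incident to $\bO_{i,t}$ all land in $\Ostar$, each contributing at most $\Delta_i/2^{t-1}$ per $\Ostar$-vertex and at least $\Delta_i/2^{t+2}$ per $\bO_{i,t}$-vertex), and summing over the $O(\log n)$ total iterations. The only difference is presentational — you spell out the degree invariant on $\Ostar$-vertices in $H_{i,t}$ (including the phase-boundary case at $t=1$) where the paper simply asserts the $\Delta_i/2^{t-1}$ max-degree bound ``by the definition of the process''; that extra care is sound and fills in a step the paper leaves implicit.
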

\begin{proof}
	Fix any $i \in [\tau]$ and $t \in [\log{\Delta_{i+1}}]$; 
	we prove that $\bO_{i,t} \leq 8 \cdot \vc(G)$. 
	The lemma follows from this since there are at most $O(\log{n})$ different sets $\bO_{i,t}$ and the union of the sets $O_{i,t}$'s is already a subset of $\Ostar$ and hence is of size $\vc(G)$ at most. 
	
	Consider the graph $H_{i,t}$. The maximum degree in this graph is at most $\Delta_i/2^{t-1}$ by the definition of the process. Since all the edges in this graph are incident on at least one vertex of $\Vvc$, there can
	be at most $\card{\Vvc} \cdot \Delta_{i}/2^{t-1}$ edges between the remaining vertices in $\Ostar$ and $\bVvc$ in $H_{i,t}$. Moreover, any vertex in $\bO_{i,t}$ has degree at least $\Delta_i/2^{t+2}$ by definition and 
	hence there can be at most 
	\[
	\frac{\card{\Vvc} \cdot \Delta_i/2^{t-1}}{\Delta_i/2^{t+2}} \leq 8 \card{\Ostar} = 8 \cdot \vc(G)
	\]
	vertices in $\bO_{i,t}$, proving the lemma.  
\end{proof}

In the rest of this section, we prove that $\PA$~faithfully mimics this hypothetical process. Note that there is a one to one correspondence between the phases in the hypothetical process and 
phases in \PA and similarly between iterations in each phase of the hypothetical process and iterations in each run of \LP. As such, we use the term phase and iteration for both \PA and the hypothetical process. 
For any phase $i \in [\tau]$ and any graph $\Gi{j}_i$ for $j \in [k_i]$ created in \PA, we define:
\begin{tbox}
\vspace{-15pt}
\begin{align*}
	\bullet~\Oi{j}_{i,t} = O_{i,t} \cap \Vi{j}_i ~~~\bullet~\bOi{j}_{i,t} = \bO_{i,t} \cap \Vi{j}_i  ~~~\bullet~\Ai{j}_{i,t} = O_{i,t} \cap \Pi{j}_{i,t}~~~\bullet~ \Bi{j}_{i,t} = \bO_{i,t} \cap \Vi{j}_i. 
\end{align*}
%%\begin{itemize}
%%	\item $\Oi{j}_{i,t} = O_{i,t} \cap \Vi{j}_i$ 
%%%	vertices in $\Gi{j}_i$ peeled from $\Ostar$ by the hypothetical process in the iteration $t$ of phase $i$. 
%%	\item $\Ai{j}_{i,t} = O_{i,t} \cap \Pi{j}_{i,t}$ 
%%%	 vertices in $\Gi{j}_i$ peeled from $\Ostar$ by \LP in iteration $t$ in phase $i$ of \PA.
%%	\item $\bOi{j}_{i,t} = \bO_{i,t} \cap \Vi{j}_i$ 
%%%	 vertices in $\Gi{j}_i$ peeled from $\bOstar$ by the hypothetical process in the iteration $t$ of phase $i$.    
%%	\item $\Bi{j}_{i,t} = \bO_{i,t} \cap \Vi{j}_i$ 
%%%	 vertices in $\Gi{j}_i$ peeled from $\bOstar$ by \LP in iteration $t$ in phase $i$ of \PA.
%%\end{itemize}
\end{tbox}

Vertices in $\Oi{j}_{i,t}$ (resp. $\bOi{j}_{i,t}$) are those vertices in the graph $\Gi{j}_i$ that are peeled by the hypothetical process (over the original graph $G$) from $\Ostar$ (resp. $\bOstar$). 
On the other hand, vertices in $\Ai{j}_{i,t}$ (resp. $\Bi{j}_{i,t}$) are those vertices in the graph $\Gi{j}_i$ that are actually peeled by \LP (over the graph $\Gi{j}_i$) from $\Ostar$ (resp. $\bOstar$). We first prove a 
simple claim about the connection of the sets $\Oi{j}_{i,t}$ and $\bOi{j}_{i,t}$ and the sets $O_i$ and $\bO_i$ defined in the hypothetical process. 
\begin{claim}\label{clm:Oi-bOi}
	With probability $1-O(1/n^2)$, for all $i \in [\tau]$, 
	\[
	O_i = \bigcup_{j=1}^{k_i} \bigcup_{t=1}^{\log{(\Delta_{i+1})}} \Oi{j}_{i,t},  ~~~~ \bO_i = \bigcup_{j=1}^{k_i} \bigcup_{t=1}^{\log{(\Delta_{i+1})}} \bOi{j}_{i,t}.
	\]
\end{claim}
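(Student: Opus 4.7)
The plan is to observe that the claim is essentially a statement about whether random vertex sampling covers the (deterministically defined) peeled sets of the hypothetical process. Fix a phase $i \in [\tau]$. The inclusion $\bigcup_{j,t} \Oi{j}_{i,t} \subseteq O_i$ (and similarly for $\bOi{j}_{i,t}$ and $\bO_i$) is immediate from the definitions: $\Oi{j}_{i,t} = O_{i,t} \cap \Vi{j}_i \subseteq O_{i,t} \subseteq O_i$. So the only direction requiring work is $O_i \subseteq \bigcup_{j,t} \Oi{j}_{i,t}$, which, unfolding $O_i = \bigcup_t O_{i,t}$ and $\bigcup_j \Oi{j}_{i,t} = O_{i,t} \cap \bigcup_j \Vi{j}_i$, reduces to showing that every vertex of $O_i \cup \bO_i$ is sampled into at least one $\Vi{j}_i$.

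Next, I would carry out the union bound. For a fixed vertex $v$ and fixed $j$, $\Pr[v \notin \Vi{j}_i] = 1 - p_i = 1 - \frac{4\log n}{\Delta_{i+1}}$, and by the independence of the $k_i$ samples,
\[
\Pr\Bracket{v \notin \bigcup_{j=1}^{k_i} \Vi{j}_i} = (1 - p_i)^{k_i} \leq \exp(-p_i \cdot k_i) = \exp(-4\log n) = n^{-4},
\]
using $k_i = \Delta_{i+1}$. Taking a union bound over the at most $n$ vertices of $O_i \cup \bO_i$ shows that, with probability at least $1 - n^{-3}$, every such vertex is sampled in some $\Vi{j}_i$, and then a second union bound over the $\tau = O(\log\log n)$ phases yields the claimed $1 - O(1/n^2)$ probability.

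I expect the proof to be essentially routine; the only mild subtlety worth flagging is that although the sets $O_{i,t}$ and $\bO_{i,t}$ are defined \emph{by} the hypothetical process rather than fixed in advance, they depend only on the graph $G$ and the minimum vertex cover $\Ostar$, which are independent of the random choices of $\Vi{j}_i$. Consequently the sampling probabilities above are truly independent across $j$ as asserted, and the union bound is valid. This is precisely the reason the algorithm samples from $V(G)$ rather than from $V(G_i)$ in each phase — an implementation choice the paper already highlights as simplifying the analysis.
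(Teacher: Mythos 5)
Your proposal is correct and follows essentially the same route as the paper: reduce the claim to showing that every vertex lands in some $\Vi{j}_i$, bound $\Pr[v \notin \bigcup_j \Vi{j}_i] \leq (1-p_i)^{k_i} \leq n^{-4}$, then union-bound over vertices and phases. The remark about independence of the hypothetical process from the sampling randomness is a nice extra clarification that the paper leaves implicit.
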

\begin{proof}
	The proof follows from the fact that in each phase $i \in [\tau]$, every vertex $v$ in $G$ would appear in some graph $\Gi{j}_i$ with high probability. Formally, 
	\begin{align*}
		\bigcup_{j=1}^{k_i} \bigcup_{t=1}^{\log{(\Delta_{i+1})}} \Oi{j}_{i,t} &= \bigcup_{j=1}^{k_i} \bigcup_{t=1}^{\log{(\Delta_{i+1})}} \paren{O_{i,t} \cap \Vi{j}_i} 
		= \bigcup_{t=1}^{\log{(\Delta_{i+1})}} \paren{O_{i,t} \cap \bigcup_{j=1}^{k_i} \Vi{j}_i} = O_i \cap  \bigcup_{j=1}^{k_i} \Vi{j}_i. 
	\end{align*}
	The probability that a vertex $v \in V$ is absent from $\bigcup_{j=1}^{k_i} \Vi{j}_i$ is at most, 
	\begin{align*}
	{(1-p_i)^{k_i}} \leq \exp\paren{-\frac{4\log{n}}{\Delta_{i+1}} \cdot \Delta_{i+1}} \leq 1/n^{4}. 
	\end{align*}
	By a union bound over all $n$ vertices, w.p. $1-1/n^3$, $\bigcup_{j=1}^{k_i} \Vi{j}_i = V$, proving the result for $O_i$. The equation for $\bO_i$ can be obtained exactly the same. Taking a union bound over all $\tau$
	iterations finalizes the proof. 
\end{proof}

In the remainder of this section, we condition on the event in Claim~\ref{clm:Oi-bOi}.  We further define 
\[
A_i := P_i \cap \Ostar~~~~~~\text{$and$}~~~~~~ B_i := P_i \cap \bOstar.
\]

$A_i$ (resp. $B_i$) is the set of all peeled vertices from $\Ostar$ (resp. $\bOstar$) across all parallel runs of \LP in phase $i$ of $\PA$ 
\emph{plus} the set of extra vertices added to $P_i$ in Line~(\ref{line:extra-vertices}) of \PA. 

We now establish the main connection between the sets of vertices $O_i,\bO_i$ and $A_i,B_i$. Roughly speaking, we show that union of the sets $A_i$ is a \emph{superset} of the sets $O_i$, while
union of the sets $B_i$ is a \emph{subset} of the sets $\bO_i$. 

\begin{lemma}\label{lem:vc-subset}
	For any phase $i \in [\tau]$, with probability $1-O(1/n^2)$, 
	\[
	\bigcup_{i'=1}^{i} A_{i'} \supseteq \bigcup_{i'=1}^{i} O_{i'} ~~~~~~\text{and}~~~~~~  \bigcup_{i' = 1}^{i} B_{i'} \subseteq \bigcup_{i'=1}^{i} \bO_{i'}.
	\] 
\end{lemma}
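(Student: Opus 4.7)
I will prove both inclusions jointly by induction on the phase $i$; the base $i=0$ is vacuous. For the inductive step, conditioning on the event of Claim~\ref{clm:Oi-bOi} and on the inductive hypothesis through phase $i-1$, the work for phase $i$ splits into (1) analyzing the parallel runs of \LP on each $\Gi{j}_i$, and (2) handling the extra-vertex cleanup in Line~(\ref{line:extra-vertices}). For (1) I perform a nested induction on the iteration $t$ inside \LP, maintaining per machine $j$ the containments $\bigcup_{t' \leq t} \Oi{j}_{i,t'} \subseteq \bigcup_{t' \leq t} \Pi{j}_{i,t'}$ and $\bigcup_{t' \leq t} (\Pi{j}_{i,t'} \cap \bOstar) \subseteq \bigcup_{t' \leq t} \bOi{j}_{i,t'}$; taking the union over $j$ and invoking Claim~\ref{clm:Oi-bOi} (which equates $\bigcup_j \Oi{j}_{i,t} = O_{i,t}$ and similarly for $\bOi{j}$) lifts these per-machine statements to the phase-$i$ statements.

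\textbf{Degree sandwich.} The core observation driving the sub-induction is that $\Ostar$ is a vertex cover, so every edge of $G$ has at least one endpoint in $\Ostar$ and $H_1$ retains exactly the $\Ostar$-$\bOstar$ edges; consequently every neighbor of a $\bOstar$-vertex already lies in $\Ostar$. For $v \in \bOstar$, the inductive hypothesis $A \supseteq O$ (from earlier phases together with the sub-induction at $t-1$) forces more $\Ostar$-vertices to have been removed locally than hypothetically, so $\deg{\Gi{j}_{i,t}}(v) \leq |N_{H_{i,t}}(v) \cap \Vi{j}_i|$. Symmetrically, for $v \in \Ostar$ the $H_{i,t}$-neighbors of $v$ all lie in $\bOstar$, and the hypothesis $B \subseteq \bO$ ensures fewer $\bOstar$-vertices have been removed locally than hypothetically; the $\Ostar$-$\Ostar$ edges present in $G$ but absent from $H$ only add to the local degree, giving $\deg{\Gi{j}_{i,t}}(v) \geq |N_{H_{i,t}}(v) \cap \Vi{j}_i|$. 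Because the hypothetical process is deterministic given $G$ and $\Ostar$, the graph $H_{i,t}$ is fixed and $|N_{H_{i,t}}(v) \cap \Vi{j}_i|$ is a sum of independent Bernoullis with mean $p_i \cdot \deg{H_{i,t}}(v)$, which I control by a Chernoff bound.

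\textbf{Two Chernoff applications.} For the upward inclusion: if $v \in \Oi{j}_{i,t}$ then $\deg{H_{i,t}}(v) \geq \Delta_i/2^t$, so $\Exp[|N_{H_{i,t}}(v) \cap \Vi{j}_i|] \geq p_i \Delta_i/2^t$, which is exactly twice the local peeling threshold $p_i \Delta_i/2^{t+1}$; a Chernoff bound with relative deviation $1/2$ shows $\deg{\Gi{j}_{i,t}}(v) \geq p_i \Delta_i/2^{t+1}$ except with probability $\exp(-\Omega(p_i \Delta_i/2^t))$, forcing $v$ into $\Pi{j}_{i,t}$ (or into some earlier $\Pi{j}_{i,t'}$). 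For the downward inclusion: if $v \in \bOstar$ is peeled locally at iteration $t$, then $\deg{\Gi{j}_{i,t}}(v) \geq p_i \Delta_i/2^{t+1}$, and combining the upper half of the sandwich with a symmetric Chernoff bound forces $\deg{H_{i,t}}(v) \geq \Delta_i/2^{t+2}$, placing $v$ in $\bO_{i,t}$.

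\textbf{Cleanup and union bound.} The extra-vertex step cannot add any $\bOstar$-vertex to $B_i$: once (1) is established, $A_{i'} \supseteq O_{i'}$ for every $i' \leq i$, and since every neighbor of $v \in \bOstar$ lies in $\Ostar$, this yields $\deg{G_{i+1}}(v) \leq \deg{H_{i+1}}(v)$ for every surviving $\bOstar$-vertex; the hypothetical process, however, ensures $\deg{H_{i+1}}(v) < \Delta_i/2^{\ceil{\log \Delta_{i+1}}+2} \leq \Delta_{i+1}/4 < \Delta_{i+1}$, so no $\bOstar$-vertex meets the cleanup threshold and $B_i$ is entirely determined by step (1). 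The main technical obstacle is calibrating the constant in $p_i = \Theta(\log n)/\Delta_{i+1}$ (and in $\Delta_\tau = \Theta(\log n)$) large enough that at the deepest iteration $t = \ceil{\log \Delta_{i+1}}$, where the Chernoff mean bottoms out at $\Theta(\log n)$, each failure event has probability at most $1/n^c$ for a sufficiently large constant $c$; a union bound over the $\poly(n)$ tuples (vertex, machine, iteration, phase) then keeps the total failure probability at $O(1/n^2)$, closing every induction.
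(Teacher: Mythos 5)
Your proposal is correct and matches the paper's proof essentially step for step: outer induction on phases, inner induction on iterations of \LP per machine, a Chernoff-concentration claim comparing local degrees in $\Gi{j}_{i,t}$ to hypothetical degrees in $H_{i,t}$ (what you call the degree sandwich — this is exactly Claim~\ref{clm:vc-degree-concentration} plus the induction step of Lemma~\ref{lem:vc-induction}), and a final argument that the cleanup in Line~(\ref{line:extra-vertices}) cannot add $\bOstar$-vertices because $A_{<i+1} \supseteq O_{<i+1}$ forces every surviving $\bOstar$-vertex to have degree $< \Delta_{i+1}$ in $G_{i+1}$. Your explicit remark about calibrating the constant in $p_i$ so that the Chernoff mean at the deepest iteration is $\Omega(\log n)$ with a large enough constant is a fair point — the paper's written constant $4$ is a little tight for the claimed $1-O(1/n^4)$ per-event bound, and you are right that this is a cosmetic tuning issue rather than a structural one.
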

\begin{proof}
	To simplify the notation, for any $i \in [\tau]$, we define 
	\[
		O_{<i} := \bigcup_{i' < i} O_{i'} ~~~~~~\text{$and$}~~~~~~ O_{\geq i} := \Ostar \setminus O_{<i}. 
	\]
	We define these sets for $A_i$, $\bO_{i}$, and $B_{i}$ similarly. Moreover, for any $j \in [k_i]$ and $t \in [\log{\paren{\Delta_{i+1}}}]$, we define, 
	\[
		\Oi{j}_i := O_i \cap \Vi{j}_i ~~~~~~{and}~~~~~~  \Oi{j}_{i,<t} := \bigcup_{t'=1}^{t-1} \Oi{j}_{i,t'} ~~~~~~{and}~~~~~~  \Oi{j}_{i,\geq t} := \paren{\Ostar \cap \Vi{j}_i} \setminus O_{<i}. 
	\]
	Again, we define these sets similarly for $\Ai{j}_i$, $\bOi{j}_i$, and $\Bi{j}_i$. 
	
	The proof is by induction on the number of phases $i$. Define $O_0 = \bO_0 = A_0 = B_0 = \emptyset$. The base case of the induction trivially holds
	for these sets. Hence, in the following, we prove the induction step. The following lemma is the heart of the proof.  
	
	\begin{lemma}\label{lem:vc-induction}
		Fix an $i \in [\tau]$; suppose
		\[
			A_{<i} \supseteq O_{<i} ~~~~~~\text{and}~~~~~~  B_{<i} \supseteq \bO_{<i}; 
		\]
		then, with probability $1-O(1/n^2)$, for all $j \in [k_i]$ and all $t \in [\log{\paren{\Delta_{i+1}}}+1]$, 
		\[ 
			\Ai{j}_{i,<t+1} \supseteq \Oi{j}_{i,<t+1} ~~~~~~~\text{and}~~~~~~  \Bi{j}_{i,<t+1} \subseteq \bOi{j}_{i,<t+1}.
		\] 
	\end{lemma}
	\begin{proof}
	Fix an index $j \in [k_i]$. We first use the fact that the graph $\Gi{j}_{i}$ is obtained from $G_i$ by 
	sampling each vertex w.p. $p_i$ to prove that the degree distribution of sampled vertices are essentially the same in both $\Gi{j}_i$ and $G_i$ (up to the scaling factor of $p_i$). 
	In the following, we use $D(v,S)$ to denote the degree of a vertex $v$ to vertices in $S$ in the
	graph $\Gi{j}_i$. We have,

	\begin{claim}\label{clm:vc-degree-concentration}
		Fix a graph $\Gi{j}_i$ in \PA and define $\Delta := {4\Delta_{i+1} \cdot \log{n}}$. For $t \in [\log{\paren{\Delta_{i+1}}}]$: 
		\begin{itemize}
			\item For any vertex $v \in \Oi{j}_{i,t}$, $D(v,{\bOi{j}_{i,\geq t}}) \geq \Delta/2^{t+1}$ in the graph $\Gi{j}_i$ w.p. $1-O(1/n^4)$.
			\item For any vertex $v \in \bOi{j}_{i,\geq t+1}$, $D(v,{\Oi{j}_{i,\geq t}}) < \Delta/2^{t+1}$ in the graph $\Gi{j}_i$ w.p. $1-O(1/n^4)$.
		\end{itemize}
	\end{claim}
	\begin{proof}
		Fix any iteration $t \in [\log{\paren{\Delta_{i+1}}}]$ and a vertex $v \in \Oi{j}_{i,t}$. 
		By definition of $O_{i,t}$, degree of $v$ (in the hypothetical process) is at least $\Delta_i/2^{t}$ in $H_{i,t}$. Note that neighbors of $v$ in $H_{i,t}$ are precisely the vertices in
		$\bO_{i,\geq t}$ in $G$. As such, we have $D(v,\bO_{i,\geq t}) \geq \Delta_i/2^{t}$ in the graph $G$. 
		
		Next, consider the graph $G_i$. By definition, $G_i = G \setminus (A_{<i} \cup B_{<i})$. By the assumption in the lemma statement, $B_{<i} \subseteq \bO_{<i}$. As such, all vertices
		in $\bO_{i,\geq t}$ also belong to the graph $G_i$. Now, let $v$ be a vertex in $\Oi{j}_{i,t}$ and consider the neighbors of $v$ in the graph $\Gi{j}_i$, i.e., the sampled induced subgraph of $G_i$. 
		Since each vertex in $\bO_{i,\geq t}$ is sampled in $\Gi{j}_i$ w.p. $p_i$, we have, 
		\begin{align*}
			\Ex\Bracket{D(v,\bOi{j}_{i,\geq t})} = p_i \cdot {D(v,\bO_{i,\geq t})} \geq p_i \cdot \frac{\Delta_{i}}{2^{t}} = \frac{4\log{n}}{\Delta_{i+1}} \cdot \frac{\Delta_i}{2^{t}} = \frac{4\Delta_{i+1} \log{n}}{2^{t}} = \frac{\Delta}{2^{t}}.
		\end{align*}
		Moreover, as $t \leq \log{\paren{\Delta_{i+1}}}$, we know that ${\Delta}/{2^{t}} \geq 4\log{n}$. As such, by Chernoff bound, we have that, w.p. $1-O(1/n^4)$, $D(v,\bOi{j}_{i,\geq t}) \geq \Delta/2^{t+1}$ in $\Gi{j}_i$.  
		
		Similarly, for a vertex $v \in \bOi{j}_{i,\geq t+1}$, the degree of $v$ (in the hypothetical process) is smaller than $\Delta_i/2^{t+2}$ in $H_{i,t}$. This means
		that $D(v,O_{i,\geq t}) < \Delta_i/2^{t+2}$ in the original graph $G$. 
		Using the exact same argument as before, we have that w.p. $1-O(1/n^4)$, $D(v,\bOi{j}_{i,\geq t}) < \Delta/2^{t+1}$ in $\Gi{j}_i$. 
	\end{proof}
	
	Define $\Delta$ as in Claim~\ref{clm:vc-degree-concentration} and notice that this is also the threshold value used in \LP in phase $i$. 
	By using a union bound on the $n$ vertices in $G$, the statements in Claim~\ref{clm:vc-degree-concentration} hold simultaneously for all vertices of $\Gi{j}_i$ w.p. $1-O(1/n^3)$; in the following we 
	condition on this event. We are now ready to prove Lemma~\ref{lem:vc-induction}. The lemma is by induction on the number of iterations $t$. 
		
	\textbf{\emph{Base case.}} Let $v$ be a vertex that belongs to $\Oi{j}_{i,1}$; we prove that $v$ belongs to the set $\Pi{j}_{i,1}$ of \LP as well, hence $v \in \Ai{j}_{i,1}$. 
	By Claim~\ref{clm:vc-degree-concentration} (for $t=1$), the degree
	of $v$ in $\Gi{j}_i$ is at least $\Delta/4$. Note that in $\Gi{j}_i$, $v$ may also have edges to other vertices in $\Ostar$ but this can only increase the degree of $v$. 
	This implies that $v$ also belongs to $\Pi{j}_{i,1}$ by the threshold chosen in $\LP$. 
	Similarly, let $u$ be a vertex in $\bO_{i,\geq 2}$, i.e., \emph{not} in $\bO_{i,1}$; we show that $u$ is not chosen in $\Pi{j}_{i,1}$, implying that $\Bi{j}_{i,1}$ can only contain vertices in $\bOi{j}_{i,1}$. 
	By Claim~\ref{clm:vc-degree-concentration}, degree of $u$ in $\Gi{j}_i$ is less than $\Delta/4$. This implies that $u$ is not peeled in $\Pi{j}_{i,1}$. 
	In summary, we have $\Ai{j}_{i,1} \supseteq \Oi{j}_{i,1}$ and $\Bi{j}_{i,1} \subseteq \bOi{j}_{i,1}$. 
	
	\textbf{\emph{Induction step.}} Now consider some iteration $t > 1$ and let $v$ be a vertex in $\Oi{j}_{i,t}$ which does not belong to $\Ai{j}_{i,<t}$, i.e., is not peeled already. 
	By induction, we know that $\Bi{j}_{i,< t} \subseteq \bOi{j}_{i,<t}$ and hence $\Bi{j}_{i,\geq t} \supseteq \bOi{j}_{i,\geq t}$. This implies that, 
	\begin{align*}
		D(v,\Bi{j}_{i,\geq t}) \geq D(v,\bOi{j}_{i,\geq t}). 
	\end{align*}
	In other words, the degree of $v$ to $\Bi{j}_{i,\geq t}$ in $\Gi{j}_i$ is at least as large as its degree to $\bOi{j}_{i,\geq t}$. Consequently,
	by Claim~\ref{clm:vc-degree-concentration}, degree of $v$ in the graph $\Gi{j}_{i,t}$ is at least $\Delta/2^{t+1}$ and hence $v$ is peeled in $\Pi{j}_{i,t}$ (and hence belongs to $\Ai{j}_{i,t}$). 
	This implies that $\Ai{j}_{i,<t+1} \supseteq \Oi{j}_{i,<t+1}$.
	
	Similarly, fix a vertex
	$u$ in $\bOi{j}_{i,\geq t+1}$. By induction, $\Ai{j}_{i,<t} \supseteq \Oi{j}_{i,<t}$ and hence the
	degree of $u$ to $\Ai{j}_{i,\geq t}$ in $\Gi{j}_{i,t}$ is at most as large as its degree to $\Oi{j}_{i,\geq t}$; note that since $\Vvc$ is a vertex cover, $u$ does not have any other edges
	in $\Gi{j}_{i,t}$ except for the ones to $\Ai{j}_{i,\geq t}$. We can now argue as before that $u$ would not be peeled in $\Pi{j}_{i,t}$ and hence does not belong to $\Bi{j}_{i,t}$. As a result, 
	$\Bi{j}_{i,<t+1} \subseteq \bOi{j}_{i,<t+1}$.
	
	The proof of the lemma can be finalized by taking a union bound over all $k_i = \Delta_{i+1} = O(n)$ possible choices for $j \in [k_i]$. 
	\end{proof}	
	
	We are now ready to finalize the proof of Lemma~\ref{lem:vc-subset}. By Lemma~\ref{lem:vc-induction}, for $t = \tmax = \log{\paren{\Delta_{i+1}}}$, we have, 
	\[
	\bigcup_{i'=1}^{i} \bigcup_{j=1}^{k_i} \bigcup_{t=1}^{\tmax}\Ai{j}_{i',t} \supseteq \bigcup_{i'=1}^{i} O_{i'} ~~~~~~~\text{and}~~~~~~  \bigcup_{i'=1}^{i}\bigcup_{j=1}^{k_i} \bigcup_{t=1}^{\tmax}\Bi{j}_{i',t} 
	\subseteq \bigcup_{i'=1}^{i}\bO_{i'}. 
	\]
	Recall that $A_i$ is a superset of $\bigcup_{j=1}^{k_i} \bigcup_{t=1}^{\tmax}\Ai{j}_{i,t}$ and hence we already have $A_{<i+1} \supseteq O_{<i+1}$, proving this part. It thus only remains to show 
	that $B_{<i+1} \subseteq \bO_{<i+1}$ as well. To do so, we need to argue that the new set of vertices from $\bOstar$ 
	added in Line~(\ref{line:extra-vertices}) all belong to $\bO_{<i+1}$. 
	
	Consider any vertex $v \in \bO_{\geq i+1}$. 
	We know that degree of $v$ to $O_{\geq i}$ is at most $\Delta_{i}/2^{\tmax+2} < \Delta_{i+1}$ as otherwise $v$ would be peeled in the last iteration. By the previous argument as $A_{<i+1} \supseteq O_{<i+1}$, 
	this implies that	the degree of this vertex in Line~(\ref{line:extra-vertices}) is smaller than $\Delta_{i+1}$ as well. Hence, even after adding the vertices with degree at least $\Delta_{i+1}$ to $P_i$ 
	in Line~(\ref{line:extra-vertices}), $P_i \cap \bOstar \subseteq \bO_{\leq i}$, finalizing the proof. 
\end{proof}

We are now ready to prove Theorem~\ref{thm:PA}. 

\begin{proof}[Proof of Theorem~\ref{thm:PA}]
	By Lemma~\ref{lem:vc-subset} for $i = \tau$, we have $\bigcup_{i = 1}^{\tau} B_{i} \subseteq \bigcup_{i=1}^{\tau} \bO_{i}$. Additionally, by Lemma~\ref{lem:vc-hypothetical-ratio}, we have 
	$\card{\bigcup_{i=1}^{\tau} \bO_{i}} = O(\log{n}) \cdot \vc(G)$. As a result $P \cap \bOstar$ is of size at most $O(\log{n}) \cdot \vc(G)$. The final result now follows since $P \cap \Ostar$ can be of size  
	at most $\card{\Ostar} = \vc(G)$. 
\end{proof}

%%\textbox{$\OPVC(G)$. \textnormal{A peeling algorithm for computing a vertex cover of a given graph $G$.}}{
%%\begin{enumerate}
%%	\item Define $\tau := O(\log\log{n})$ degree thresholds $D_1,\ldots,D_{\tau}$: 
%%	\[
%%		D_1 = n, ~~~~ D_2 = n^{1/2}, ~~ \ldots ~~ D_i = n^{1/2^{i-1}}, ~~ \ldots ~~ D_{\tau} = 1.
%%	\] 
%%	\item \textbf{For} $i = 1$ to $\tau-1$ \textbf{do} 
%%	\begin{enumerate}
%%		\item Let $\Delta_i := \ceil{\log_2{(\frac{D_{i}}{D_{i+1}})}}-1$ and define $G_{i,1} := G_{i}$ ($G_1 = G$). 
%%		\item \textbf{For} $j = 1$ to $\Delta_i$ \textbf{do}
%%		\begin{align*}
%%			P_{i,j} 	&:= \set{\text{vertices of degree} \geq D_{i} / 2^{j+1} \text{ in $G_{i,j}$}}, \\
%%		  	G_{i,j+1} 	&:= G_{i,j} \setminus P_{i,j}. 
%%		\end{align*}
%%		\item Define $P_i := \bigcup_{j=1}^{\Delta_i} P_{i,j}$ and $G_{i+1} = G_{i,\Delta_{i}}$. 
%%	\end{enumerate}
%%	\item Return $P = \bigcup_{i=1}^{\tau} P_i$. 
%%\end{enumerate}
%%}

\subsection{MPC Implementation of \PA}\label{sec:mpc-PA}

We show here that \PA can be implemented in the MPC model with $\Ot(n)$ space per each machine. 
The main part of the argument is to show that the space on each machine is enough to run the \LP in Line~(\ref{line:LP}) of \PA. To do this, it suffices to show that, 

\begin{lemma}\label{lem:Gij-size}
	With probability $1-O(1/n^2)$, for any $i \in [\tau]$ and any $j \in [k_i]$, the number of edges in $\Gi{j}_i$ is $O(n \log^{2}{n})$. 	
\end{lemma}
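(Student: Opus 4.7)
The plan is to bound the number of edges in $\Gi{j}_i$ by separately controlling (i) the maximum degree of any vertex in $\Gi{j}_i$ and (ii) the number of vertices in $\Vi{j}_i$, and then multiplying the two. The crucial input from the algorithm is the invariant maintained by Line~(\ref{line:extra-vertices}): every vertex in $G_i$ has degree at most $\Delta_i$. Combined with $\Delta_i = \Delta_{i+1}^2$ and $p_i = 4\log n/\Delta_{i+1}$, this is exactly the right scaling to make the product $(\text{max degree})\cdot(\text{number of vertices})$ land on $O(n\log^2 n)$ in expectation.

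First, I would fix a phase $i$ and a machine $j$, and bound the max degree in $\Gi{j}_i$. Fix any $v \in V(G_i)$ and condition on $v \in \Vi{j}_i$. Its degree in $\Gi{j}_i$ is then stochastically dominated by a $\textnormal{Bin}(\Delta_i, p_i)$ variable (its neighbors are independently sampled into $\Vi{j}_i$ with probability $p_i$, and it has at most $\Delta_i$ neighbors in $G_i$), whose expectation is $p_i\Delta_i = 4\log n\cdot\Delta_{i+1}$. Since $\Delta_{i+1} \geq 4\log n$, this expectation is at least $16\log^2 n$, so a standard multiplicative Chernoff bound gives
\[
\Pr\bigl[\deg_{\Gi{j}_i}(v) \geq 8\log n\cdot \Delta_{i+1}\bigr] \leq \exp\bigl(-\tfrac{4\log n\cdot\Delta_{i+1}}{3}\bigr) \leq 1/n^{5}.
\]
A union bound over the at most $n$ vertices then gives that the max degree in $\Gi{j}_i$ is $O(\log n\cdot \Delta_{i+1})$ with probability $1-1/n^4$.

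Next, I would bound $|\Vi{j}_i|$. Since $\Vi{j}_i$ is generated by sampling each of the $n$ vertices of $G$ independently with probability $p_i$, its size is a sum of $n$ iid Bernoulli$(p_i)$ variables with mean $p_i n = 4n\log n/\Delta_{i+1}$. Because $\Delta_{i+1} \leq \sqrt{n}$, this mean is at least $4\sqrt{n}\log n \gg \log n$, so another Chernoff bound yields $|\Vi{j}_i| \leq 8n\log n/\Delta_{i+1}$ with probability $1-1/n^4$.

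Combining the two bounds, the number of edges in $\Gi{j}_i$ is at most $\tfrac{1}{2}|\Vi{j}_i|\cdot(\text{max degree}) \leq \tfrac{1}{2}\cdot \tfrac{8n\log n}{\Delta_{i+1}}\cdot 8\log n\cdot \Delta_{i+1} = 32\,n\log^2 n$, where the $\Delta_{i+1}$ factors cancel — this is exactly where the choice $p_i = 4\log n/\Delta_{i+1}$ and the invariant $\Delta_i = \Delta_{i+1}^2$ are used. A final union bound over all $\tau = O(\log\log n)$ phases and all $k_i \leq n$ machines per phase multiplies the failure probability by $O(n\log\log n)$, still giving the claimed $1-O(1/n^2)$ guarantee. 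The main thing to get right is the cancellation of $\Delta_{i+1}$ in the product and checking that the Chernoff deviation parameter is always $\Omega(\log n)$ so that the per-vertex and per-machine failure probabilities are small enough for the union bound; neither of these is a serious obstacle given the invariants already established.
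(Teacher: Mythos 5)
Your proof is correct and follows essentially the same decomposition as the paper's: bound the maximum degree of $\Gi{j}_i$ via the invariant $\deg_{G_i} \leq \Delta_i$ and a Chernoff bound, bound $|\Vi{j}_i|$ via a second Chernoff bound, multiply, and union bound over all $i,j$. Your version merely spells out the Chernoff parameters more explicitly (e.g., the stochastic domination by $\textnormal{Bin}(\Delta_i,p_i)$), which is a clean way to phrase the same argument.
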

\begin{proof}
	By Line~(\ref{line:extra-vertices}) of \PA, we have the invariant that at the beginning of each phase $i \in [\tau]$, the maximum degree of the graph $G_i$ is at most $\Delta_i$. Hence, 
	the expected maximum degree of the graph $\Gi{j}_i$ is at most $p_i \cdot \Delta_i = \frac{4\log{n}}{\Delta_{i+1}} \cdot \Delta_i = 4\Delta_{i+1} \cdot \log{n}$. By Chernoff bound, with probability $1-O(1/n^4)$, 
	the maximum degree of $\Gi{j}_i$ is $O(\Delta_{i+1} \cdot \log{n})$. Using another application of Chernoff bound, we also have that the number of vertices assigned to $\Gi{j}_i$ is at
	most $O(p_i \cdot n) = O(n\log{n}/\Delta_{i+1})$, with probability $1-O(1/n^4)$. As a result, the total number of edges in $\Gi{j}_i$ is $O(n\log^{2}{n})$ with this probability. Taking a union bound over all
	possible $k_i = O(n)$ indices $j$ and $O(\log\log{n})$ choices for $i$ finalizes the proof. 
\end{proof}

It is also easy to see that the graph $G_{\tau+1}$ fits the memory of a single machine and hence last step of \PA can be implemented locally. 
The rest of \PA can be implemented in the MPC model using standard techniques similar to~\cite{CzumajLMMOS17}; we refer the reader to Section 5 of~\cite{CzumajLMMOS17} for details of this implementation. 

\subsection{Extension to Smaller Memory Requirements} \label{sec:mpc-smaller-memory}

We now briefly describe the necessary changes required to make \PA work when the memory of each machine is some fixed parameter $s = f(n) = n^{\Omega(1)}$. Here,  $f(n)$ can be even sublinear in $n$, i.e., $f(n) = o(n)$.
For simplicity of exposition, we assume that the memory on each machine is  $O(s) \cdot \poly\log{(n)}$ (as opposed to exactly $s$), a simple rescaling of the parameters shows the result when the memory is exactly $s$. 
There are only two changes that need to be done in \PA: 
\begin{enumerate}
	\item Define the degree thresholds in the first line of \PA as $\tau := O(\log\log{n})$ thresholds: 
	\[
		\Delta_1 := n, ~~~~ \Delta_2 := \frac{n}{s^{1/2}}, ~~ \ldots ~~ \Delta_i := \frac{n}{s^{1-1/2^{i-1}}}, ~~ \ldots ~~ \Delta_{\tau} := 4\paren{\frac{n}{s}}\cdot\log{n}.
	\] 
	\item In Line~(\ref{line:local}) of \PA, instead of computing an approximate minimum vertex cover of $G$ on a single machine, directly simulate the original peeling process using $O(1)$ MPC rounds per each iteration (not phase). 
\end{enumerate}

The proof of correctness of this algorithm is exactly the same as the one for the original algorithm.  Moreover, the first part of the algorithm, i.e., implementing the peeling process in parallel still requires $O(\log\log{n})$ MPC rounds
and (by the same argument in Lemma~\ref{lem:Gij-size}) $\Ot(s)$ memory per machine. 
Finally, implementing the last step, i.e., the modified version of Line~(\ref{line:local}) of \PA, requires $\Ot(s)$ memory per machine and $O(\log{(\frac{n}{s}}))$ MPC rounds. This is because the 
maximum degree of $G_{\tau+1}$ is at most $\Ot(n/s)$ and hence $O(\log{(\frac{n}{s}}))$ iterations of the peeling process suffice to solve the problem and each iteration can
 be implemented in $O(1)$ MPC rounds using standard techniques. We refer the reader to~\cite{CzumajLMMOS17} (see Lemma~5.1)
for more details on the implementation. To conclude, we obtain that, 
\begin{theorem}\label{thm:vc-full}
	There exists an MPC algorithm that with high probability computes an $O(\log{n})$ approximation to minimum vertex cover in $O(\log\log{n} + \log{\paren{\frac{n}{s}}})$ rounds, assuming that the memory per each machine is 
	$s = n^{\Omega(1)}$.  
\end{theorem}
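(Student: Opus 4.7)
The plan is to run the modified \PA with the new threshold sequence $\Delta_i := n/s^{1-1/2^{i-1}}$, so that $\Delta_1 = n$ and $\Delta_\tau = \tilde{O}(n/s)$ after $\tau = O(\log\log n)$ phases, and then replace Line~(\ref{line:local}) with an MPC simulation of the Parnas--Ron peeling process on the residual graph $G_{\tau+1}$. The main obstacle is not any single step but careful bookkeeping: I need to verify that the new thresholds satisfy the telescoping identity needed for the $\tilde{O}(s)$ memory budget, and that the Chernoff bound in Claim~\ref{clm:vc-degree-concentration} remains valid under the modified schedule despite the much smaller ``gap'' $\Delta_i/\Delta_{i+1}$.

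For correctness, the hypothetical process and the entire analysis in Section~\ref{sec:vc-analysis} depend on the sequence $\Delta_i$ only through the abstract relations between consecutive thresholds and the invariant that $G_i$ has maximum degree at most $\Delta_i$. The Chernoff step in Claim~\ref{clm:vc-degree-concentration} demands $p_i \cdot \Delta_i / 2^{t+1} = \Omega(\log n)$; a direct calculation gives $p_i \cdot \Delta_i = 4\log n \cdot (\Delta_i/\Delta_{i+1}) = 4\log n \cdot s^{1/2^i}$, and by the choice of $\tmax$ in \LP, the local threshold $\Delta/2^{t+1}$ never drops below $2\log n$, so the concentration inequality still fails only with inverse-polynomial probability. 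Consequently, Lemmas~\ref{lem:vc-induction} and~\ref{lem:vc-subset} and Theorem~\ref{thm:PA} transfer verbatim, and Lemma~\ref{lem:vc-hypothetical-ratio} continues to give an $O(\log n)$-approximation because the total iteration count across all phases plus the final peeling step is still $O(\log n)$.

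For the memory bound, I would mirror Lemma~\ref{lem:Gij-size}: in phase $i$, each $\Gi{j}_i$ has $\tilde{O}(n/\Delta_{i+1})$ vertices and maximum degree $\tilde{O}(\Delta_i/\Delta_{i+1})$ in expectation, so by Chernoff the edge count concentrates at $\tilde{O}(n \cdot \Delta_i/\Delta_{i+1}^2)$. The key telescoping identity $\Delta_i/\Delta_{i+1}^2 = s/n$, which I verify by substituting $\Delta_i = n/s^{1-1/2^{i-1}}$ and $\Delta_{i+1} = n/s^{1-1/2^i}$ and simplifying the exponent of $s$, then gives $\tilde{O}(s)$ edges per subgraph with probability $1-O(1/n^2)$ after a union bound over all $O(n)$ subgraphs and $O(\log\log n)$ phases.

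For the round count, the $\tau = O(\log\log n)$ compressed phases are implemented as before. Only the final step changes: since $G_{\tau+1}$ now has maximum degree $\tilde{O}(n/s)$ rather than polylogarithmic, it no longer fits on a single machine when $s = o(n)$. Instead, I would simulate each iteration of the peeling process (with geometrically decreasing degree threshold starting from $\tilde{O}(n/s)$) directly in $O(1)$ MPC rounds using standard distributed degree-counting and edge-removal primitives, exactly as in Lemma~5.1 of~\cite{CzumajLMMOS17}, iterating $O(\log(n/s))$ times to drive the maximum degree down to $O(\log n)$, at which point a single machine suffices. The approximation argument from Section~\ref{sec:vc-analysis} applies to this final segment as well since it is just the original Parnas--Ron process. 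Summing yields $O(\log\log n + \log(n/s))$ total rounds with $\tilde{O}(s)$ memory per machine, as claimed.
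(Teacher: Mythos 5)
Your proposal matches the paper's approach: same modified threshold sequence $\Delta_i = n/s^{1-1/2^{i-1}}$, same observation that the correctness analysis of Section~\ref{sec:vc-analysis} depends on the thresholds only through the maximum-degree invariant and the Chernoff condition, same replacement of Line~(\ref{line:local}) by an $O(1)$-rounds-per-iteration MPC simulation of the residual peeling. You actually supply more detail than the paper does, and the two key calculations you check are correct: the Chernoff expectation $p_i\Delta_i = 4\log n \cdot s^{1/2^i}$ stays at least $4\log n$ throughout the \LP loop because $\tmax$ is defined precisely to stop once $\Delta/2^{\tmax} \le 4\log n$, and the telescoping identity $\Delta_i/\Delta_{i+1}^2 = s/n$ gives the $\tilde{O}(s)$ per-subgraph memory bound in the analogue of Lemma~\ref{lem:Gij-size}. (One small bookkeeping point you gloss over: with the new schedule $\Delta_i/\Delta_{i+1}$ and $\Delta_{i+1}$ no longer coincide, so the inner-loop length in the hypothetical process and in \LP must both be taken as $\lceil\log(\Delta_i/\Delta_{i+1})\rceil$, not $\lceil\log\Delta_{i+1}\rceil$; with the original thresholds these happen to be equal.)

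The one genuine slip is in your final sentence: after $O(\log(n/s))$ peeling iterations the maximum degree is $O(\log n)$, and you claim ``a single machine suffices.'' That is false once $s = o(n\log n)$, since the residual graph can still have $\Theta(n\log n)$ edges while each machine holds only $\tilde{O}(s)$. The fix is exactly what the paper does: keep running the same $O(1)$-round-per-iteration MPC simulation for another $O(\log\log n)$ iterations until the graph is trivial (or, equivalently, observe that $\log(\tilde{O}(n/s)) = \log(n/s) + O(\log\log n)$ iterations suffice in total). This is absorbed by the $O(\log\log n + \log(n/s))$ bound, so the theorem's round count is unaffected, but as written your concluding step does not stand on its own.
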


\subsection*{Acknowledgements}
I am grateful to my advisor Sanjeev Khanna for our collaboration in~\cite{AssadiK17} that formed a crucial component of the current work and to Sanjeev Khanna and Krzysztof Onak  
for feedback on a draft of this note. I also thank Omri Weinstein and Qin Zhang for helpful discussions.

\bibliographystyle{abbrv}
\bibliography{general}

\begin{thebibliography}{10}

\bibitem{AhnG15}
K.~J. Ahn and S.~Guha.
\newblock Access to data and number of iterations: Dual primal algorithms for
  maximum matching under resource constraints.
\newblock In {\em Proceedings of the 27th {ACM} on Symposium on Parallelism in
  Algorithms and Architectures, {SPAA} 2015, Portland, OR, USA, June 13-15,
  2015}, pages 202--211, 2015.

\bibitem{AndoniNOY14}
A.~Andoni, A.~Nikolov, K.~Onak, and G.~Yaroslavtsev.
\newblock Parallel algorithms for geometric graph problems.
\newblock In {\em Symposium on Theory of Computing, {STOC} 2014, New York, NY,
  USA, May 31 - June 03, 2014}, pages 574--583, 2014.

\bibitem{AssadiK17}
S.~Assadi and S.~Khanna.
\newblock Randomized composable coresets for matching and vertex cover.
\newblock In {\em Proceedings of the 29th {ACM} Symposium on Parallelism in
  Algorithms and Architectures, {SPAA} 2017, Washington DC, USA, July 24-26,
  2017}, pages 3--12, 2017.

\bibitem{BeameKS13}
P.~Beame, P.~Koutris, and D.~Suciu.
\newblock Communication steps for parallel query processing.
\newblock In {\em Proceedings of the 32nd {ACM} {SIGMOD-SIGACT-SIGART}
  Symposium on Principles of Database Systems, {PODS} 2013, New York, NY, {USA}
  - June 22 - 27, 2013}, pages 273--284, 2013.

\bibitem{CzumajLMMOS17}
A.~Czumaj, J.~{\L}{\k{a}}cki, A.~M{\k{a}}dry, S.~Mitrovi{\'c}, K.~Onak, and
  P.~Sankowski.
\newblock Round compression for parallel matching algorithms.
\newblock {\em arXiv preprint arXiv:1707.03478}, 2017.

\bibitem{GoodrichSZ11}
M.~T. Goodrich, N.~Sitchinava, and Q.~Zhang.
\newblock Sorting, searching, and simulation in the mapreduce framework.
\newblock In {\em Algorithms and Computation - 22nd International Symposium,
  {ISAAC} 2011, Yokohama, Japan, December 5-8, 2011. Proceedings}, pages
  374--383, 2011.

\bibitem{KarloffSV10}
H.~J. Karloff, S.~Suri, and S.~Vassilvitskii.
\newblock A model of computation for mapreduce.
\newblock In {\em Proceedings of the Twenty-First Annual {ACM-SIAM} Symposium
  on Discrete Algorithms, {SODA} 2010, Austin, Texas, USA, January 17-19,
  2010}, pages 938--948, 2010.

\bibitem{LattanziMSV11}
S.~Lattanzi, B.~Moseley, S.~Suri, and S.~Vassilvitskii.
\newblock Filtering: a method for solving graph problems in mapreduce.
\newblock In {\em {SPAA} 2011: Proceedings of the 23rd Annual {ACM} Symposium
  on Parallelism in Algorithms and Architectures, San Jose, CA, USA, June 4-6,
  2011 (Co-located with {FCRC} 2011)}, pages 85--94, 2011.

\bibitem{OnakR10}
K.~Onak and R.~Rubinfeld.
\newblock Maintaining a large matching and a small vertex cover.
\newblock In {\em Proceedings of the 42nd {ACM} Symposium on Theory of
  Computing, {STOC} 2010, Cambridge, Massachusetts, USA, 5-8 June 2010}, pages
  457--464, 2010.

\bibitem{ParnasR07}
M.~Parnas and D.~Ron.
\newblock Approximating the minimum vertex cover in sublinear time and a
  connection to distributed algorithms.
\newblock {\em Theor. Comput. Sci.}, 381(1-3):183--196, 2007.

\end{thebibliography}

%%
%%\clearpage
%%\appendix

\end{document}